\newcommand{\id}{\mathbf{1}}
\newcommand{\suchthat}{\,|\,}
\newcommand{\supp}[1]{\mathrm {Supp}\,#1}
\newcommand{\synd}[1]{\mathrm {Synd}\,{#1}}
\newcommand{\stab}{\mathcal S}
\newcommand{\cent}[1]{\mathcal Z(#1)}
\newcommand{\gauge}{\mathcal G}
\newcommand{\logical}{\mathcal L}
\newcommand{\qed}{\hspace*{\fill}$\blacksquare$}
 \newtheorem{thm}{Theorem}
 \newtheorem{lem}[thm]{Lemma}
 \newtheorem{defn}[thm]{Definition}
 \newtheorem{prop}[thm]{Proposition}
 \newenvironment{proof}{\noindent \emph{Proof.}}{\qed}
\begin{document}

\title{Dimensional Jump in Quantum Error Correction}

\author{H\'ector Bomb\'in}
\affiliation{Yukawa Institute for Theoretical Physics, Kitashirakawa Oiwakecho, Sakyo-ku, 606-8502 Kyoto\\
Deparment of Mathematical Sciences, University of Copenhagen, Universitetsparken 5, DK-2100 Copenhagen \O}

\begin{abstract}
Topological stabilizer codes with different spatial dimensions have complementary properties.
Here I show that the spatial dimension can be switched using gauge fixing.
Combining 2D and 3D gauge color codes in a 3D qubit lattice, fault-tolerant quantum computation can be achieved with constant time overhead on the number of logical gates, up to efficient global classical computation, using only local quantum operations.
Single-shot error correction plays a crucial role.
\end{abstract}

\pacs{03.67.Lx, 03.67.Pp}

\maketitle

\section{Introduction}

Quantum error correction methods~\cite{lidar:2013:quantum} that emphasize locality~\cite{dennis:2002:TQM} constitute nowadays the most promising approach for the practical implementation of a quantum computer.
In particular, topological stabilizer codes~\cite{kitaev:2003:ftanyons} receive a good deal of attention due to their flexibility and relative simplicity.
2D topological stabilizer codes are potentially easiest to implement, but low dimensionality severely constrains the operations that can be performed locally~\cite{bravyi:2013:classification}.
3D codes do not suffer from such obstructions~\cite{bombin:2007:3DCC}, but require many more qubits, among other drawbacks.
The purpose of this work is to bring together the best of the two worlds by providing a bridge between them: a procedure to switch back and forth between 2D and 3D codes.

Among 2D topological stabilizer codes color codes~\cite{bombin:2006:2DCC} are optimal in terms of the local implementation of gates.
Namely, all Clifford gates are transversal, \emph{i.e.} act individually on the physical qubits composing the code (or pair-wise for two-qubit logical gates).
See~\cite{nigg:2014:experimental} for a recent single-qubit implementation.
Unfortunately Clifford gates are not enough for universal computation, but this is all that 2D topological stabilizer codes can offer~\cite{bravyi:2013:classification,pastawski:2015:fault}.
The way out is to either resort to complementary techniques that increase the amount of resources needed~\cite{bravyi:2005:universal}, to consider more complicated codes~\cite{kitaev:2003:ftanyons}, or to increase the spatial dimension.

3D (gauge) color codes~\cite{bombin:2015:gauge} are 3D topological stabilizer codes with many remarkable characteristics that, put together,
enable fault-tolerant quantum computation with \emph{quantum-local} elementary operations, \emph{i.e.} involving only a finite depth local quantum circuit aided with global classical information processing~\cite{bombin:2015:single-shot}.
This comes at a cost: spatial locality can only be attained in 4D due to two-qubit logical gates. 
In addition, 3D color codes require $O(n^{3/2})$ qubits to correct the same number of errors
as an $n$-qubit 2D color code.

Dimensional jumps solve these problems, at least to a large extent.
As the name suggests, in a dimensional jump the spatial dimension of a local code is switched \emph{in constant time}, or more precisely via a quantum-local operation, where locality refers to a 3D layout.
In particular, the procedure allows to switch back and forth fault-tolerantly between 3D and 2D color codes.

Dimensional jumps make use of the principles of \emph{single-shot error correction}~\cite{bombin:2015:single-shot,brown:2015:fault}, \emph{i.e.} quantum-local fault-tolerant error correction.
Single-shot error correction plays a key role in the quantum-locality of operations in 3D color codes.
This is again the case for dimensional jumps, which involve error correction.

\begin{figure}
  \centering
  \includegraphics[width=6cm]{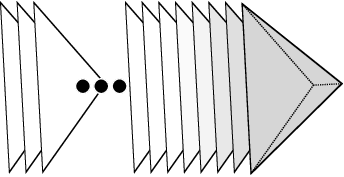}
  \caption{
  A 3D layout for fault-tolerant quantum computation.
  Each layer of the stack is a 2D color code encoding a logical qubit.
  On one extreme sits a 3D color code lattice.
  The stack acts as a memory. 
  Computations happen on the 3D end.
  }
\label{fig:layout}
\end{figure}

Equipped with dimensional jumps one can envision the 3D-local fault-tolerant quantum computing layout of figure~\ref{fig:layout}.
The starting point is a stack of 2D color codes, analogous to the one proposed for toric codes in~\cite{dennis:2002:TQM}.
Each layer encodes a single logical qubit, and all Clifford gates can be perfomed transversally.
On one extreme of the stack sits a 3D color code lattice, and the 2D color code sitting next to it can be converted back and forth into a 3D color code.
As a 2D code it can be part of the Clifford gates occurring in the stack, and as a 3D code it becomes isolated from the other 2D codes but a non-Clifford gate can be implemented, achieving universality~\cite{bombin:2007:3DCC}.

An advantage of the layout is that all logical qubits but one are encoded in 2D, dramatically reducing the resources when compared with an all-3D encoding.
Also important is that all elementary operations are quantum-local.
As a drawback, the 3D capabilities are only available in one location, and therefore parallel computation is lost
\footnote{
If the number of physical qubits is less of a concern, one can always consider a network of 3D codes connected by 2D codes.
In such a setting parallel gates are again possible, subject to the locality constraints imposed by the lattice structure.
}.
The time overhead is still constant on the number of logical gates, and for this it is enough to be able to perform swap gates in the stack, with computations confined to neighbors of the 3D code.
Finally, it is worth noting that, for the 2D and 3D constructions of~\cite{bombin:2015:gauge}, the required elementary measurements involve at most 6 physical qubits (plus any ancillas used).

\section{Background}

This section summarizes and rephrases previous results that will be needed later.

\subsection{Stabilizer codes}

A stabilizer subsystem code~\cite{poulin:2005:stabilizer} on $n$ physical qubits  is defined by two subgroups $\stab,\gauge$ of the Pauli group of operators on $n$ qubits.
The stabilizer group $\stab$ defines the code subspace where quantum information is encoded: 
encoded states are eigenstates, with eigenvalue $+1$, of all the elements of $\stab$.
The gauge group $\gauge$ generates the algebra of operators that do not disturb encoded information.
The groups $\stab$ and $\gauge$ satisfy
\begin{equation}\label{eq:stab}
-\id\not\in\stab,\qquad \stab\propto \gauge\cap\cent\gauge,
\end{equation}
where $\cent {\mathcal A}$ denotes the centralizer of $\mathcal A$ in the Pauli group.
The operators in the group $\cent\gauge$ are (bare) logical (Pauli) operators: they transform encoded states while preserving the code subspace.
Logical operators that are equivalent up to stabilizers have the same action on encoded qubits.
Therefore it is convenient to choose a representative group of logical operators $\logical\subseteq\cent\gauge$ such that each element of $\logical$ belongs to a different class of the quotient $\cent\gauge/\stab$.

\subsection{Error correction}

Error correction is the procedure that attempts to remove the errors that a code has suffered.
\emph{Ideally} it amounts to
\begin{description}[style=sameline]
\item [a.]
measure a set of generators of $\stab$ (the result is the \emph{syndrome} $\sigma$ of $\stab$), and
\item[b.]
apply a Pauli operator $E$ that yields an encoded state (such $E$ is said to have syndrome $\sigma$).
\end{description}
The operator $E$ anticommutes with the generators with negative eigenvalue outcome.
Its choice should minimize residual logical errors.

When the stabilizer generators are local the above ideal process is quantum-local.
However, in practice error correction is itself noisy, and often in making the process fault-tolerant quantum-locality is lost.
Surprisingly, for some codes quantum-locality can be preserved: they allow single-shot error correction~\cite{bombin:2015:single-shot}.

\subsection{Gauge fixing}

Gauge fixing is a procedure~\cite{paetznick:2013:universal} that allows to switch back and forth between two codes $\stab,\gauge$ and $\stab',\gauge'$ if~\cite{bombin:2015:gauge} they share a representative group of logical operators $\logical$ and
\begin{equation}\label{eq:fix1}
\stab\subseteq\stab',
\end{equation}
or, equivalently (up to a choice of signs for $\stab$ and $\stab'$)
\begin{equation}\label{eq:fix2}
\gauge'\subseteq\gauge.
\end{equation}
Any encoded state of $\stab'$ is also an encoded state for $\stab$.
Transforming an encoded state of $\stab$ into an encoded state of $\stab'$ is called gauge fixing.
The procedure is similar to error correction: \emph{ideally} it amounts to
\begin{description}[style=sameline]
\item [a.]
extract the syndrome $\sigma$ of $\stab'$, and
\item[b.]
apply some operator $E\in\gauge$ with syndrome $\sigma$.
\end{description}
The syndrome $\sigma$ is trivial for elements of $\stab$, and $E$ is unique up to elements of $\gauge'$.

\subsection{Splitting}\label{sec:split}
 
As a particular case of gauge fixing, consider that the code $\stab'$, $\gauge'$ actually consists of several codes with gauge groups $\stab_i$, $\gauge_i$ defined on \emph{disjoint} sets of $n_i$ qubits each. 
Together they form the code on $n=\sum_i n_i$ qubits
\begin{equation}
\stab'=\prod_i\stab_i, \qquad \gauge'=\prod_i\gauge_i,
\end{equation}
where it is implicitly assumed that all operators have been suitably tensored with identities to act on the $n$ qubits.
If the codes have logical representative groups $\logical_i$, we might choose for the $n$ qubit code the group
\begin{equation}
\logical:=\prod_i\logical_i
\end{equation}
Any code $\stab$, $\gauge$ on the $n$ qubits that has logical group $\logical$ and satisfies conditions~(\ref{eq:fix1}, \ref{eq:fix2}) can be gauge fixed to $\stab'$, $\gauge'$ (at least ideally).
In this case gauge fixing amounts to \emph{splitting} the $n$-qubit code into several pieces, each with some of the original physical and logical qubits.
Conversely, putting together the pieces yields an encoded state of $\stab$, $\gauge$.

\begin{figure}
  \centering
  \includegraphics[width=8cm]{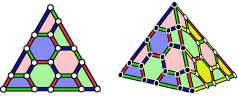}
  \caption{
(Left) A triangular 2-colex.
Plaquettes appear in their complementary color, \emph{i.e.} rg-plaquettes are colored blue.
(Right) A tetrahedral 3-colex.
Cells appear in their complementary color, \emph{i.e.} rgb-plaquettes are colored yellow.
Every facet is a triangular 2-colex.
The outer vertices (rgb-facet) are marked in black, the rest are inner vertices.
  }
\label{fig:geometry}
\end{figure}

\subsection{Colexes}

2D color codes~\cite{bombin:2006:2DCC} are defined on \emph{2-colexes}.
These are 2D trivalent lattices with 3-colored edges in which \emph{plaquettes} (2-cells) have edges of two colors.
The 2-colexes considered here are triangular.
In particular, each side of the triangle has edges in different combinations of two colors, see Fig.~\ref{fig:geometry}.
A plaquette with red and green edges is a rg-plaquette, a blue edge is a b-edge, etc.

3D color codes~\cite{bombin:2007:3DCC} are defined on \emph{3-colexes}.
These are 3D tetravalent lattices with 4-colored edges, in which plaquettes have edges of two colors and \emph{cells} (3-cells) have edges of three colors.
The 3-colexes considered here as a starting point are tetrahedral.
In particular, each facet of the tetrahedron has edges in different combinations of three colors, see Fig.~\ref{fig:geometry}.

In the present work the specific choice of 2- and 3-colexes is not relevant.
For detailed constructions and pictures, the reader is referred to the literature~\cite{bombin:2007:branyons,bombin:2007:3DCC,bombin:2015:gauge,kubica:2015:universal,brown:2015:fault}.

\subsection{Color codes}

2D color codes and 3D gauge color codes~\cite{bombin:2015:gauge} are self-dual CSS topological stabilizer codes, \emph{i.e.} the generators of the stabilizer and gauge group are products either exclusively of bit-flip $X$ operators or exclusively of phase-flip $Z$ operators, with the same geometry for $X$- and $Z$-type generators.
Therefore, the code is completely defined by the support of the generators.

Both in 2D and 3D there is one physical qubit per vertex of the colex.
Denote the respective stabilizer and gauge groups $\stab_2, \gauge_2$ and $\stab_3,\gauge_3$.
Let the support of an edge, plaquette or cell operator be the set of vertices of a given edge, plaquette or cell, respectively.
\emph{E.g.} a plaquette operator $X_p$ flips the qubits of the plaquette $p$, and so on.
Both $\gauge_2$ and $\gauge_3$ are generated by the set of all plaquette operators, something that will be highly relevant below.
The stabilizers in general depend on the geometry of the code.
In 2D triangular codes plaquette operators generate $\stab_2$, and in 3D tetrahedral codes cell operators generate $\stab_3$.
In both cases the support of $X$ and $Z$ logical operators can be chosen to be the set of all qubits.

It is interesting to mention that error correction has been substantially explored for 2D color codes~\cite{katzgraber:2009:cc,katzgraber:2009:unionjack,andrist:2010:tricolored,landahl:2011:fault,bombin:2012:strong,sarvepalli:2012:efficient,delfosse:2014:decoding,stephens:2014:efficient}, whereas for 3D little is known~\cite{brown:2015:fault}.

\section{Dimensional Jumps}

This section contains the main results of the paper.
Namely, (i) that a tetrahedral 3D gauge color code can be split into a triangular 2D color code and another 3D gauge color code with no encoded qubits and (ii) that switching fault-tolerantly back and forth between the 2D and 3D codes only requires quantum-local operations.
The generalization to higher dimensions is also briefly discussed.

\subsection{Outer and inner codes} 

The starting point is a geometrical observation: each triangular facet of a tetrahedral 3-colex is a triangular 2-colex.
Assume that such a 3-colex and a distinguished facet are given.
This facet will be denoted the \emph{outer} (2-)colex.
Conversely, the \emph{inner} (3-)colex is composed of those vertices, edges, and cells not in contact with the outer colex.
Cells/plaquettes with both inner and outer vertices are called \emph{interface} cells/plaquettes.

The 3-colex yields a 3D gauge color code $\stab_3,\gauge_3$, and the outer colex yields a 2D color code $\stab_2, \gauge_2$. 
Crucially for the results below, the 3D code admits logical operator representatives with support the set of all outer qubits, see appendix~\ref{app:codes}.
That is, the 3D gauge color code and the 2D color code have a common group $\logical$ of logical operator representatives.

A 3D gauge color code $\stab_{\text{in}},\gauge_{\text{in}}$ can also be defined for the inner colex.
Plaquette operators generate $\gauge_{\text{in}}$ (by definition) and $\stab_{\text{in}}$ is generated by (i) cell operators and (ii) the restriction to the inner qubits of interface cell operators.
The resulting code encodes no logical qubits, see Appendix~\ref{app:codes}.

\subsection{Dimensional collapse} 

As observed above, (i) the inner code has no logical qubits, (ii) the outer 2D code and the 3D code share representative logical operators and (iii)
\begin{equation}
\gauge_2\,\gauge_{\text{in}}\subseteq \gauge_3.
\end{equation}
According to section~\ref{sec:split} (with $\gauge=\gauge_3$ and $\gauge'=\gauge_2\gauge_{\text{in}}$) the 3D code splits via gauge fixing in two pieces: the outer 2D code (that keeps the logical qubit) and the inner code.
Moreover, due to the CSS structure~\eqref{eq:fix1} holds exactly, not up to a choice of signs:
\begin{equation}
\stab_3\subseteq\stab_2\,\stab_{\text{in}}.
\end{equation}

Denote by $\gauge_3|_2$ the group formed by the operators in $\gauge_3$ constrained to the outer qubits.
Its generators are edge operators, the restriction of interface plaquettes to the outer code.
Ideally, the dimensional jump from the 3D code to the 2D code amounts to
\begin{description}[style=sameline]
\item [1.]
discard inner qubits,
\item [2.]
extract the syndrome $\sigma$ of $\stab_2$, and
\item [3.]
apply some $E\in\gauge_3|_2$ with syndrome $\sigma$.
\end{description}
The operator $E$ is unique up to elements of $\stab_2$ and is a product of \emph{string operators}.
\emph{E.g.} a b-string $s$ is composed of outer b-edges $e_i$ and might have endpoints at outer rg-plaquettes, see Fig.~\ref{fig:flux}. 
The string operator $X_s$ flips the qubits of the edges $e_i$ and anticommutes with a plaquette operator $Z_p$ if and only if $p$ is an endpoint of $s$.

\begin{figure}
  \centering
  \includegraphics[width=8cm]{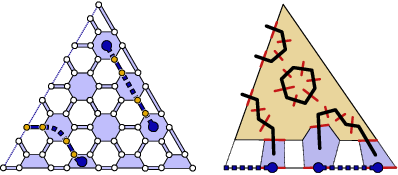}
  \caption{
  The relationship between b-strings (dotted-blue lines), rg-plaquette syndromes (blue circles) and flux lines (thick black).
  (Left)
  2-colex with its b-edges and rg-plaquettes emphasized.
  A $X_s$ string operator has support on the qubits along its string $s$. 
  It anticommutes with $Z$ plaquette operators at its endpoints.
  (Right)
  Schematic 3-colex with the outer colex on the bottom and the inner colex shaded.
  Flux lines are composed of edges dual to rg-plaquettes (in red) and can have endpoints at inner rgy-cells (blue).
  Each b-string has the syndrome of a certain flux line.
  }
\label{fig:flux}
\end{figure}

\subsection{Flux} 

The above procedure is not fault-tolerant, no matter how the syndrome $\sigma$ is extracted: even if the gauge fixing process is perfect, most pre-existing single-qubit errors on outer qubits yield a residual logical error.
To achieve fault-tolerance the key is to extract $\sigma$ \emph{indirectly} from the inner qubits.
The resulting gauge fixing process is not only fault-tolerant but also quantum-local.

Let the outer colex be the rgb-facet of the tetrahedron, \emph{i.e.} the facet with r-, g- and b-edges on it.
Suppose that inner rg-plaquette $Z_p$ operators are measured on an \emph{encoded state} of $\stab_3$.
Each plaquette has a \emph{dual} edge that pierces the plaquette connecting the centers of the cells meeting at the plaquette.
The result of the measurement is codified as the set $\gamma$ of edges dual to the inner rg-plaquettes with eigenvalue -1.

Given a cell $c$ with rg-plaquettes $p_i$, every vertex in $c$ belongs to exactly one of the $p_i$. 
Therefore
\begin{equation}\label{eq:closed}
Z_c=\prod_{i} Z_{p_i}.
\end{equation}
For the initial state $Z_c=\id$ and thus $\gamma$ has no \emph{inner endpoints}, \emph{i.e.} inner cells at which an odd number of edges meet.
Therefore $\gamma$ is a disjoint union of paths $\gamma_i$, or `flux-lines', that are either closed or have endpoints (i) at the rgy-facet or (ii) at an interface rgy-cell, see Fig.~\ref{fig:flux}.

Every interface rgy-cell $c$ has a unique outer rg-plaquette $p_c$ (the rest are inner, unless the colex is pathological), see Fig.~\ref{fig:flux}.
A flux-line is said to have an endpoint at $p_c$ when it has an endpoint at $c$.
According to~\eqref{eq:closed} $Z_{p_c}$ has eigenvalue $-1$ if and only if $p_c$ is the endpoint of and odd number of flux-lines $\gamma_i$.
Thus the syndrome of $\stab_2$ can be recovered from the measurement of inner rg-, gb- and rb-plaquette $X$ and $Z$ operators, which commute, see appendix~\ref{app:codes}~\footnote
{
It is worth emphasizing that the gauge fixing discussed here differs from the of one considered in~\cite{bombin:2015:gauge}, which involved all plaquette operators of either $X$ or $Z$ type, which do not commute.
}.
The steps 1 and 2 above can be substituted by:
\begin{description}[style=sameline]
\item [1+2.]
Obtain a syndrome $\sigma$ from the destructive measurement of the plaquette operators in $\gauge_{\text{in}}$ with colors matching the outer plaquettes.
\end{description}

\subsection{Measurement errors}

Consider again the measurement of inner rg-plaquette $Z_p$ operators.
Suppose that the original encoded state is noiseless but measurements can fail: instead of the correct dual edge set $\gamma$ they yield $\gamma+\delta$, with $+$ the symmetric difference of sets: plaquette operators corresponding to edges in $\delta$ are assigned the wrong eigenvalue.
The set $\gamma+\delta$ can have inner endpoints.
Let $\delta_0$ be a set of dual edges of minimal cardinality with the same inner endpoints (efficiently computable using perfect matching~\cite{dennis:2002:TQM}).
It provides an `effective' set
\begin{equation}
\gamma_\mathrm{eff}:=\gamma+\delta+\delta_0=\gamma +\bigsqcup_i \omega_i,
\end{equation}
where $\delta+\delta_0$ decomposes as a disjoint union of flux-lines $\omega_i$ (because $\delta+\delta_0$ has, like $\gamma_\mathrm{eff}$, no inner endpoints).

For every outcome $\gamma$ of rg-flux-lines there is some operator $E_{\gamma}$ that is a product of b-string $X$ operators and has the syndrome corresponding to $\gamma$; it is unique up to stabilizers.
By using $\gamma_\mathrm{eff}$ as input for the third gauge fixing step, instead of the correct operator $E_\gamma$ we apply
\begin{equation}
E_{\gamma_\mathrm {eff}}\sim E_{\gamma} \,E_{\delta+\delta_0} \sim E_\gamma\,\prod_i E_{\omega_i}
\end{equation}
where the equivalence is up to stabilizers: measurement noise $\delta$ translates into errors $E_{\omega_i}$ at the final stage.

How bad are these errors? Each $E_{\omega_i}$ is a b-string operator $X_{s_i}$ that we can \emph{choose} subject to the constraint that its endpoints on outer plaquettes should match those of $\omega_i$. 
It follows by inspection of the different geometries, depicted in Fig.~\ref{fig:flux}, that the number of qubits in the support of $E_{\omega_i}$ is, up to a constant depending on the lattice structure, smaller than the length $|\omega_i|$ of the flux-line.
Moreover, at least half of the edges of $\omega_i$ belong to $\delta$, rather than $\delta_0$: 
\begin{equation}
|\delta_0+\omega_i|= |\delta_0|-|\omega_i\cap\delta_0|+|\omega_i\cap\delta|\geq |\delta_0|,
\end{equation}
because $\delta_0+\omega_i$ has the same inner endpoints as $\delta_0$.
Thus an error $E_{\omega_i}$ of large support requires a large $\omega_i$ within which at least $|\omega_i|/2$ measurements fail.
This suggests that local noise in the measurement process will yield local residual noise.
Indeed, a standard argument~\cite{dennis:2002:TQM,bombin:2015:single-shot} shows that if the noise is local and below a threshold, so will be the residual noise, see appendix~\ref{app:gauge fixing}.

\subsection{Fault tolerance} 

In general the original 3D state will be noisy, and so will be the measurements and the application of $E\in\gauge_3|_2$.
Local errors affecting outer qubits at any time will remain local, because the application of $E$ is local.
Local errors affecting inner qubits and previous to measurements can be absorbed as local measurement errors.
If all the noise is local and below a threshold, so will be the residual noise after the `dimensional collapse'.

\subsection{Blowing up} 

The inverse dimensional jump only requires initializing the inner code.
Since it encodes no logical qubits, it suffices to apply error correction to an arbitrary state of the inner qubits.
Moreover, since it is a 3D gauge color code with local stabilizer and gauge generators it admits single-shot error correction, see appendix~\ref{app:codes},
and thus the process is quantum-local.

\subsection{Higher dimensions}

Colexes and gauge color codes can be defined for arbitrary dimensions~\cite{bombin:2015:gauge}.
For a given $D$-colex it is possible to build different color codes with labels $(d,e)$ that indicate the dimension of the gauge generators: $d$ and $e$ are positive integers with $d+e\leq D$, $X$-type generators are $(e+1)$-cell operators, and $Z$-type generators are $(d+1)$-cell operators.

The most interesting class of color codes is that constructed out of \emph{simplicial} colexes, which generalize the triangular and tetrahedral colexes considered above.
They encode a single qubit, and the logical $X$ and $Z$ operators can be chosen to have as support the whole colex.
For these codes gauge fixing can be used to change, within a given colex, the parameters $(d,e)$ at will~\cite{bombin:2015:gauge}.

The dimensional jump described above switches between a given $(1,1)$ tetrahedral color code and a $(1,1)$ triangular color code defined on any of the facets of the tetrahedron.
Analogously, one can switch between a $(d,e)$ $D$-simplicial color code and a $(d,e)$ $(D-1)$-simplicial color code defined on any of the facets of the $D$-simplex~\footnote
{
Indeed, the gauge generators trivially match, and logical operators can be chosen to have as support the whole facet.
To check that such operators commute with any cell operator of dimension at least two, observe that there exists only one color $\kappa$ such that $\kappa$-edges might have a single vertex in the facet, and the cell operator contains edges of more than one color.
}.
The lesson is that gauge color codes with different values of $D$ or $(d,e)$ are more than just separate codes.
Altogether they form a \emph{system} of topological stabilizer codes, and much more is possible by making them work together than by using them separately.

Finally, for every dimension $D$ there exists a minimal simplicial colex with $2^{D+1}-1$ vertices~\cite{bombin:2015:gauge}.
The corresponding color codes are quantum Reed-Muller codes and are known to be related via gauge fixing~\cite{anderson:2014:fault}.

\section{3D-local computation}

Dimensional jumps open the door to a 3D-local fault-tolerant quantum computer in which all operations are quantum local.
This sections describes a particular approach to achieve this this.
The time overhead is constant on the number of logical gates, and only a single 3D color code lattice is required.
As a drawback, any parallelism of the original circuit is lost, so no higher-level fault-tolerance is possible.

\section{Layout}

The general layout is described in figure~\ref{fig:layout}: a stack of 2D color codes with a 3D color code at one end of the stack.
Computations are performed at the end where the 3D code lives, whereas the stack acts as a memory.
For the logical qubits encoded on the 2D color codes next to the 3-colex structure, quantum-local initialization and universal gates are available via dimensional jumps (quantum-local measurements do not require them~\cite{dennis:2002:TQM}).
Moreover, for those 2D codes single-shot error correction is available via dimensional jumps, making fault-tolerant CNot gates quantum-local.
The only operation required in the stack is the (trivially transversal) swapping of the 2D codes.
This allows to move logical qubits so that they are available at the computing end of the stack when needed. 
Single-shot error correction is not available in the stack, and it is not needed. 
For 2D codes in the stack error correction amounts to keeping track of errors by repeatedly measuring the stabilizer generators~\cite{dennis:2002:TQM}.

Only one obstacle is left: ensuring that the required logical qubits are always available at the computing end for each step of the computation, without incurring in waiting times.
A procedure to achieve this is given next.

\subsection{Computing at the end of a stack}\label{sec:computing}

Consider, as described above, a quantum computer where logical qubits are placed on a stack and non-trivial computations are limited to the end of the stack: the rest of qubits can only be swapped with their neighbors in the stack.
It is possibly not entirely obvious that the number of rounds of parallel gates can be, up to a constant, equal to the number of gates in the circuit model.
This section provides a simple algorithm for swapping the stacked logical qubits that achieves this.

At the end of the stack there are a number of (logical) qubits on which non-swap gates can be performed.
At least there should be two of them, since two-qubit gates are necessary.
These qubits can be regarded as the `internal' memory of a `processing unit', while the rest form an `external' memory: the stack.
On the stack two kinds of operations are allowed: the swap of neighboring qubits and the swap of the topmost qubit with an internal qubit.

The computation is divided in steps.
At each step a certain external qubit is required to be at the end of the stack, so that it can be accessed by the processing unit.
The challenge is to perform on the stack, after each computational step, a finite depth circuit composed of nearest neighbor swaps that places the next qubit to be processed at the end of the stack.

The proposed algorithm is the following.
The positions in the stack are labeled with integers: position 1 is the topmost.
At the $s$-th step the position $i$ has an integer label $m_i$: it indicates that the qubit currently at position $i$ should be at position $1$ on the step $m_i\geq s$, but not before.
In particular at the $s$-th step
\begin{equation}\label{eq:condm1}
m_1=s.
\end{equation}
When a qubit will not be used again this label can take any value larger than the number of steps, with the condition that $m_i\neq m_j$ for $i\neq j$.

At start, before the first step, qubits are ordered according to their first use, \emph{i.e.}
\begin{equation}
m_i<m_{i+1}, \qquad m_1=1.
\end{equation}
After the step $s$ is performed the label $m_1$ is updated to its new value (the rest stay clearly the same) and the following operations are performed, each consisting of a round of parallel swaps:
\begin{description}[style=sameline]
\item [1.]
For every $n\geq 0$, if
\begin{equation}
m_{2n+1} > m_{2n+2}
\end{equation}
swap the qubits at positions $2n+1$ and $2n+2$ (and the labels $m_{2n+1}$ and $m_{2n+2}$ accordingly).
\item [2.]
For every $n\geq 1$, if
\begin{equation}
m_{2n} > m_{2n+1}
\end{equation}
swap the qubits at positions $2n$ and $2n+1$ (and the labels $m_{2n}$ and $m_{2n+1}$ accordingly).
\end{description}

To check that \eqref{eq:condm1} is satisfied at every step, it suffices to show that at every step and for all positions $i$
\begin{equation}\label{eq:condm1b}
m_1\leq m_i.
\end{equation}
Suppose that after the $s$-th step and the update of $m_1$
\begin{equation}\label{eq:condm2}
m_{2n} < m_{2n+k},\qquad n,k>0.
\end{equation}
Then, after the first round of swaps is performed 
\begin{equation}\label{eq:condm3}
m_{2n+1} < m_{2n+1+k},\qquad n\geq 0, k>0,
\end{equation}
and after the second round of swaps is performed both~\eqref{eq:condm1b} and~\eqref{eq:condm2} hold again. 
In particular $m_1=s+1$ as required.
Thus the inequalities~\eqref{eq:condm2} are an invariant of the two-round procedure. 
Since they are initially satisfied, the algorithm works.

\begin{acknowledgments} 

I thank the MINECO grant FIS2012-33152, the CAM grant QUITEMAD+, 
the Sapere Aude grant of the Danish Council for Independent Research, the ERC Starting Grant QMULT and the CHIST-ERA project CQC.
This work was supported by the International Research Unit of Advanced Future Studies at Kyoto University.

\end{acknowledgments}

\appendix

\section{Boundaries in 3D color codes}\label{app:codes}

The main text makes use of some basic properties of gauge color codes that were presented in~\cite{bombin:2015:single-shot}.
For easier reference they are gathered in this appendix.

\subsection{Geometry}

The tetrahedral colexes of the main text are just an example of a larger class of geometries for color codes.
As in~\cite{bombin:2015:single-shot}, of interest here are 3-colexes $\Lambda$ that are topological balls and are obtained by removing some vertices (and all cells in contact with them) of a larger colex $\bar\Lambda$ without boundary.
The boundary of $\Lambda$ is a topological sphere divided in \emph{regions} (discs), that meet at  \emph{borders} (open lines), that meet at \emph{corners} (points).
They are defined as follows.
Consider a rgb-cell of $\bar\Lambda$ that is not part of $\Lambda$ but is in contact with $\Lambda$.
The rgb-cell and $\Lambda$ share a set of plaquettes $R$, which can only contain rg-, rb- or gb-plaquettes.
Typically these plaquettes form a topological disc in the boundary of $\Lambda$, which is called a rgb-region. (otherwise there are several discs, each a region).
The r- and g-edges that separate a rgb-region and a rgy-region are said to form a rg-border.
For each rg-border there is exactly one rg-plaquette in $\bar\Lambda$ (and not in $\Lambda$) that contains its edges.
Finally, when a rg-border, a rb-border and a gb-border meet at a vertex, this is said to be a y-corner of the colex.
Again, for each y-corner there is exactly one y-edge in $\bar\Lambda$ (and not in $\Lambda$) that contains it.
This description of the boundary of $\Lambda$ in terms of regions, borders and corners does not depend on $\bar\Lambda$: it is intrinsic to $\Lambda$.

In the case of the tetrahedral colex each facet is a region with a different color combination: rgb, rgy, rby and gby.
The simplest $\bar\Lambda$ is obtained by adding a single additional vertex together with a single edge, plaquette and cell for each color combination.
The resulting manifold is a 3-sphere.
Due to this construction, color codes in tetrahedral colexes have been called `punctured'.

The inner colex described in the main text is another example of the above general class of geometries.
It is obtained from a tetrahedral colex by erasing all the cells in contact with one of the facets/regions. 
Each of the removed cells contributes a region in the inner colex.
\emph{E.g.} in the main text the outer set of vertices is the rgb-facet, which is in contact with rgy-, rby- and gby-cells.
To obtain the inner colex all the outer vertices are removed, together with those cells.
The remaining inner colex has rgy-, rby- and gby-regions, which are of two kinds.
They can correspond to one of the original facets of the tetrahedron, or they can correspond to one of the erased cells.

The regions of the tetrahedral and inner colex turn out to have very different properties at the level of the code.
This motivates some definitions.
A border is \emph{odd} if it connects two corners with different colors.
A region without odd borders is said to be \emph{frozen}.
A region with an odd number of odd borders of any given color is said to be \emph{free}.
In a tetrahedral colex all regions are free, \emph{e.g.} the rgb-facet has a single odd rg-border, connecting the r-corner and the b-corner.
In an inner colex all regions are frozen, because all corners have the same color (in the above example they are y-corners).

\subsection{Stabilizer and logical operators}

As stated in the main text, a 3D gauge color code is obtained by placing a qubit at each vertex of the 3-colex and attaching gauge $X$ and $Z$ generators to plaquettes.
A bit-flip plaquette operator $X_p$ has support on the qubits belonging to the plaquette $p$, and similarly for a phase-flip plaquette operator $Z_p$.
If $p$ is a rg-plaquette then $X_p$ can only anticommute with $Z_{p'}$ if $p'$ is a by-plaquette:
\emph{e.g.} if $p'$ is a gb-plaquette then it shares with $p$ a certain number $x$ of b-edges, and then exactly $2x$ qubits (a qubit belongs at most to one b-edge).

For the above class of geometries, it is shown in~\cite{bombin:2015:single-shot} that the generators of $\cent\gauge$ are (i) cell operators and (ii) region operators, \emph{i.e.} operators of the form
\begin{equation}
X_R:=\prod_{i\in R} X_i,\qquad Z_R:=\prod_{i\in R} Z_i,
\end{equation}
where $X_i$, $Z_i$ are the Pauli $X$, $Z$, operators on the $i$-th qubit and $R$ is a region, regarded here as set of vertices/qubits.
Moreover, (i) a region $R$ is free if and only if $X_R$ and $Z_R$ anticommute, and (ii) given two different regions $R$ and $R'$, $X_R$ and $Z_{R'}$ anticommute if and only if they share an odd number of odd borders.
From this result it follows that to obtain a representative group $\logical$ of bare logical operators 
\begin{equation}
\logical\subseteq \cent\gauge, \qquad \logical\simeq \cent\gauge/\stab,
\end{equation}
it suffices to choose a minimal generating set for $\logical$ among $X$ and $Z$ region operators.
It follows also that for frozen regions $R$ the operators $X_R$ and $Z_R$ are stabilizer elements.
In particular, if $R$ is a frozen rgy-region with $y$-corners $X_R$ can be obtained as a product of plaquette operators $X_p$ as follows (and similarly for $Z$ operators)
\begin{equation}\label{eq:stabR}
X_R = \prod_{p \in \text{rg-plaquettes in $R$}} X_p.
\end{equation}
Indeed, every vertex/qubit belongs to exactly one rg-plaquette of $\bar\Lambda$, and thus either (i) it belongs to exactly one rg-plaquette of $\Lambda$ or (ii) it is part of an rg-border.
The second options is never true for a frozen rgy-region with y-corners: it only shares vertices with ry-, gy and by borders.
Moreover, $R$ only borders with rby- and gby-regions, which cannot contain rg-plaquettes: if a vertex belongs to $R$, it belongs to a unique rg-plaquette contained in $R$.

When all the regions are free region operators cannot contribute any stabilizer generators: a product of $X$ and $Z$ region operators can only belong to the stabilizer if it is trivial.
When all the regions are frozen, on the other hand, all region operators belong to the stabilizer.
Region operators are, however, not independent.
In a colex where all corners are y-corners the following identity holds (and similarly for $Z$ operators)
\begin{equation}\label{eq:redundant}
\prod_{c \in\text{rgb-cells}} X_c \prod_{c \in\text{rgy-cells}} X_c =\prod_{R \in\text{rgy-regions}} X_c,
\end{equation}
where $X_c$ are bit-flip cell operators and $X_p$ are bit-flip plaquette operators.
This is because each vertex/qubit belongs (i) exactly to one rgb-cell and (ii) exactly to one rgy-cell, \emph{unless} it belongs to a rgy-region.
Ulternatively, this lack of independence of region operators is a trivial consequence of the flux picture of the main text and the expression~\eqref{eq:stabR}: a flux incoming at a certain rg-plaquette of a given rgy-region has to exit at another rg-plaquette of another (or the same) rgy-region.

In tetrahedral codes all the regions are free and thus the stabilizer $\stab$ is generated by cell operators alone.
In inner codes all the regions are frozen and thus the stabilizer has as generators (i) cell operators and (ii) region operators, except those coming from the facets of the original tetrahedral colex (facet operators are redundant due to~\eqref{eq:redundant}).
The take home message is that for all the geometries considered in the main text there exist a set local generators of the stabilizer.

\subsection{Single-shot error correction}

Gauge color codes with local stabilizer and gauge generators exhibit single-shot error correction~\cite{bombin:2015:single-shot}.
In particular, this is true for the above geometries when (i) all the regions are free or (ii) all the regions are frozen.
The quantum-local process for fault-tolerant error correction can be split in two analogous processes that correct separately bit-flip and phase-flip errors.
Let $\gauge_Z$, $\stab_Z$ denote the gauge and stabilizer subgroups generated by $Z$ operators.
Single-shot error correction of bit-flip errors amounts to
\begin{description}[style=sameline]
\item [1.]
measure $Z$ plaquette operators (generators of $\gauge_Z$), 
\item[2.]
process the measurements (classically) to obtain a syndrome $\sigma$ for $\stab_Z$ (a subgroup of $\gauge_Z$), 
\item[3.]
choose a bit-flip operator $E$ with syndrome $\sigma$, and
\item[4.]
apply $E$. 
\end{description}

In the gauge color code families considered in the main text both the stabilizer and gauge group have local generators. 
Single-shot error correction is both possible for tetrahedral codes (because all regions are free) and for inner codes (because all regions are frozen).

\section{Single-shot gauge fixing}\label{app:gauge fixing}

The purpose of this appendix is to show that the dimensional collapse procedure of the main text is fault-tolerant.
The derivation will be entirely parallel to the one used in~\cite{bombin:2015:single-shot} to show that single-shot error correction is feasible with 3D gauge color codes.
We will need in particular the following definition and lemma from~\cite{bombin:2015:single-shot}.

\begin{defn}\label{def:bounded}
Let $p(A)$ be a probability distribution over subsets $A\subseteq B$ of some set $B$.
Given some $\alpha>0$ the distribution $p$ is $\alpha$-bounded if for every $A\subseteq B$
\begin{equation}\label{eq:tilde}
\tilde p(A):=\sum_{A'\supseteq A} p(A') \leq \alpha^{|A|}.
\end{equation}
\end{defn}

\begin{lem}\label{lem:bounded}
Consider a family of graphs with bounded maximum degree and some $k>0$.
Each graph has a set of nodes $\Gamma$ and comes equipped with (i) two subsets of nodes $\Gamma_i\subseteq \Gamma$, $i=1,2$, and (ii) a function $f:\Gamma_1\longrightarrow\Gamma$ such that for every cluster (set of nodes) $V\subseteq \Gamma_1$ 
and for every connected component $V_{\text c}$ of the cluster $f(V)$
\begin{equation}\label{eq:connectivity}
|V\cap V_{\text c}|\geq k |V_{\text c}|.
\end{equation}

There exists some $\alpha_0>0$ such that for every $\alpha$ with $0<\alpha<\alpha_0$ the following holds.
If a probability distributions $p_1(V)$ over clusters $V\subseteq\Gamma_1$ is $\alpha$-bounded, then the probability distribution $p_2(V')$ over clusters $V'\subseteq\Gamma_2$ defined by
\begin{equation}\label{eq:p2}
p_2(V'):=\sum_{V\suchthat f(V)\cap\Gamma_2= V'} p_1(V).
\end{equation}
is $\beta$-bounded with
\begin{equation}\label{eq:beta}
\beta :=  \frac {(\alpha/\alpha_0)^{k}}{1-(\alpha/\alpha_0)^{k}},
\end{equation}
\end{lem}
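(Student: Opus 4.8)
The plan is to prove directly that $\tilde p_2(A)\le\beta^{|A|}$ for every $A\subseteq\Gamma_2$, adapting the percolation-type estimate of~\cite{dennis:2002:TQM,bombin:2015:single-shot} for single-shot error correction.

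First I would unfold the definition~\eqref{eq:p2}. Since $A\subseteq\Gamma_2$ one has $f(V)\cap\Gamma_2\supseteq A$ iff $f(V)\supseteq A$, so
\begin{equation}
\tilde p_2(A)=\sum_{V'\supseteq A}p_2(V')=\sum_{V:\,f(V)\supseteq A}p_1(V).
\end{equation}
To each cluster $V\subseteq\Gamma_1$ contributing here I attach a \emph{certificate}: the set $R(V)$ obtained as the union of those connected components of $f(V)$ that meet $A$. Then $R(V)\supseteq A$, \emph{every} connected component of $R(V)$ meets $A$, and by the hypothesis~\eqref{eq:connectivity} applied to the cluster $V$, each component $R_i$ of $R(V)$ satisfies $|V\cap R_i|\ge k|R_i|$ (note $V\cap R_i\subseteq\Gamma_1$). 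Grouping the sum according to $R:=R(V)$ and relaxing the conditions on $V$,
\begin{equation}
\tilde p_2(A)\le\sum_{R\supseteq A}\ \sum_{\substack{V\ \text{with}\ |V\cap R_i|\ge k|R_i|\\ \text{for each component } R_i \text{ of } R}}p_1(V),
\end{equation}
where now $R$ runs only over sets all of whose connected components meet $A$.

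The inner sum is controlled by $\alpha$-boundedness. Fix such an $R$, with components $R_1,\dots,R_m$. Every admissible $V$ contains, for each $i$, a subset $T_i\subseteq R_i\cap\Gamma_1$ of cardinality $\lceil k|R_i|\rceil$; summing over all choices of the tuple $(T_i)$ and using~\eqref{eq:tilde} on $\bigsqcup_iT_i\subseteq\Gamma_1$,
\begin{equation}
\sum_{V}p_1(V)\le\prod_i\binom{|R_i|}{\lceil k|R_i|\rceil}\,\alpha^{\lceil k|R_i|\rceil}\le(2\alpha^k)^{|R|},
\end{equation}
since $\binom{n}{j}\le2^n$, $\alpha<1$, and $\sum_i|R_i|=|R|$. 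What remains is to count the certificates: because the graphs have bounded maximum degree, the number of sets $R$ with $|R|=n$, $R\supseteq A$, and every connected component meeting $A$ is at most $C^n$ for a constant $C$ depending only on the degree bound, and it vanishes for $n<|A|$. (One convenient way to see this: adjoin a vertex $\star$ joined to every node of $A$; such an $R$ then becomes a connected cluster $R\cup\{\star\}$ of size $n+1$ containing the fixed connected set $A\cup\{\star\}$, which grows only through the bounded-degree part of the graph.) Putting $\alpha_0:=(2C)^{-1/k}$, so that $2C\alpha^k=(\alpha/\alpha_0)^k<1$ precisely when $\alpha<\alpha_0$, and summing the geometric series from $n=|A|$,
\begin{equation}
\tilde p_2(A)\le\sum_{n\ge|A|}(2C\alpha^k)^n=\frac{(\alpha/\alpha_0)^{k|A|}}{1-(\alpha/\alpha_0)^k}\le\beta^{|A|}
\end{equation}
for $|A|\ge1$ (the case $A=\emptyset$ being trivial), with $\beta$ as in~\eqref{eq:beta}. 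Hence $p_2$ is $\beta$-bounded.

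The step I expect to be the crux is the combinatorial count above. A naive estimate of the number of \emph{connected} clusters of size $n$ through a fixed node is $\mu^n$, with no dependence on $|A|$; the reason one insists that \emph{all} connected components of $R$ meet $A$ is precisely that this forces $n\ge|A|$, so that the geometric tail begins at $n=|A|$ and yields the needed $|A|$-th power. Everything else is routine bookkeeping with $\alpha$-bounded distributions, mirroring the single-shot estimate of~\cite{bombin:2015:single-shot}.
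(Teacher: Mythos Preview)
The paper does not actually prove this lemma: it is quoted verbatim from~\cite{bombin:2015:single-shot} (see the sentence introducing Definition~\ref{def:bounded} and Lemma~\ref{lem:bounded} in Appendix~\ref{app:gauge fixing}), so there is no in-paper proof to compare against. Your argument is therefore the right kind of thing to supply, and it is essentially correct and follows the standard percolation/cluster-expansion template from those references: rewrite $\tilde p_2(A)$ as a sum over $V$ with $f(V)\supseteq A$, attach to each $V$ the union $R(V)$ of components of $f(V)$ touching $A$, use~\eqref{eq:connectivity} to force $V$ to contain $\Omega(k|R|)$ nodes of $R$, bound the resulting sum by $\alpha$-boundedness, and finish with a lattice-animal count on $R$.

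One point deserves a little more care: the $\star$-vertex device you sketch does make $R\cup\{\star\}$ connected, but the contracted vertex has degree $|A|$, so the naive animal bound in the augmented graph gives a base depending on $|A|$, not a uniform $C^n$. The cleanest fix is to argue component-wise directly. Each admissible $R$ decomposes into connected components $R_1,\dots,R_m$, each containing at least one element of $A$; choosing sizes $t_a\ge 0$ with $\sum_{a\in A}t_a=n$ (at most $\binom{n+|A|-1}{|A|-1}\le 4^n$ choices since $|A|\le n$) and, for each $a$ with $t_a>0$, a connected set of size $t_a$ through $a$ (at most $\mu^{t_a}$ choices with $\mu=\mu(d)$ the usual animal constant), overcounts every admissible $R$ and yields $N(n,A)\le(4\mu)^n$. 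With this substitution your geometric-series conclusion and the final inequality $\tfrac{x^{|A|}}{1-x}\le\bigl(\tfrac{x}{1-x}\bigr)^{|A|}$ for $x=(\alpha/\alpha_0)^k\in(0,1)$ go through exactly as you wrote.
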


As argued in the main text, it is enough to show that local measurement noise in the syndrome extraction step gives rise to local residual noise after the gauge fixing step.
We will do this for some simple but compelling enough model of noise.
The starting point is some family of 3D tetrahedral gauge color codes that is regular enough,~\emph{i.e.} the lattice is translationally invariant up to boundaries and locally identical across the family, and the shape of the tetrahedron just scales in size across the family, without other changes.
As in the text, we will focus on the gauge fixing of rg-plaquette $Z$ operators in $\stab_2$.
Other color combinations and $X$ operators are analogous.

Let $\mathcal E$ be the set of edges dual to inner rg-plaquettes.
Recall that each edge $e\in\mathcal E$ has two endpoints, each of which can be (i) in an inner cell, in which case it is called an \emph{inner endpoint}, (ii) in an interface rgy-cell, in which case it is regarded as an endpoint at the corresponding outer rg-plaquette, and called \emph{outer endpoint}, and (iii) at the rgy-facet (this case is uninteresting).
The endpoints of some $\gamma\subseteq\mathcal E$ are those inner cells or outer plaquettes that are endpoints of an odd number of edges $e\in\gamma$.
Let $\mathcal E_0\subseteq \mathcal E$ be the subset of elements without inner endpoints.
Given $\gamma\in\mathcal E_0$ let $\partial \gamma$ denote the set of outer rg-plaquettes where $\gamma$ has an endpoint.

Let $\mathcal B\subset\gauge_3|_2$ be the set of bit-flip operators that are products of b-string operators (in the outer 2D color code).
A given $E\in \mathcal B$ can only anticommute with an outer plaquette operator $Z_p$ if $p$ is an rg-plaquette.
Let $\synd E$ denote the set of such rg-plaquettes, so that $\synd E$ characterizes the syndrome of $E$ in $\stab_2$.
The geometry of a 2D triangular color code is such that $\synd E$ can take any value, it is unconstrained.
In particular, for every $\gamma\in\mathcal E_0$ there exists some $E_\gamma\in\mathcal B$ with
\begin{equation}
\synd E_\gamma = \partial\gamma.
\end{equation}
The operator $E_\gamma$ is unique up to stabilizers, and we choose it so that it has minimal support, \emph{i.e.} for every $E\in\mathcal B$
\begin{equation}\label{eq:Egamma}
\synd E=\partial\gamma\quad\Longrightarrow\quad |\supp E_\gamma|\leq|\supp E|. 
\end{equation}
As discussed in the text, it follows by inspection of the geometry of the problem that there exists some constant $K>0$ such for every set of dual inner edges $\gamma$ without inner endpoints
\begin{equation}\label{eq:K}
|\supp E_{\gamma}|\leq K |\gamma|.
\end{equation}
In particular, $K$ is constant across the family of codes.

We model local noise in the measurements through a probability distribution $p(\delta)$ over sets $\delta\subseteq\mathcal E$.
Recall that plaquette operators from rg-plaquettes dual to the elements in $\delta$ give a wrong measurement outcome.
To model the locality of the measurement noise, we impose that $p(\delta)$ is $\alpha$-bounded for some $\alpha>0$.

The residual error when the wrong set of measurement outcomes is given by $\delta\subseteq \mathcal E$ is 
\begin{equation}
E_{\delta+\delta_0},
\end{equation}
where $\delta_0\subseteq \mathcal E$ has the same inner endpoints as $\delta$ and minimal cardinality among sets with that property, \emph{i.e.} for any $\delta'\in\mathcal E$,
\begin{equation}\label{eq:delta0}
\partial\delta=\partial\delta'\quad\Longrightarrow\quad |\delta_0|\leq|\delta'|. 
\end{equation}
Let $\mathcal Q$ be the set of physical qubits in the outer 2D code.
To quantify the residual noise we construct a distribution $p'(Q)$ over sets of qubits $Q\subseteq\mathcal Q$, with $p'(Q)$ the probability that the support of $E_{\delta+\delta_0}$ is $Q$, \emph{i.e.}
\begin{equation}\label{eq:pp}
p'(Q):=\sum_{\delta| \supp{E_{\delta+\delta_0}}=Q}p(\delta).
\end{equation}

The following result quantifies the locality of the residual noise.
It shows in particular that it can be made as small as desired by improving the quality of the measurements.
\begin{prop}
Given a family of 3D gauge color codes as described above and satisfying in particular condition \eqref{eq:K} for some $K>0$, there exists some $\alpha_0>0$ such that for every $\alpha$-bounded distribution $p(\delta)$ the distribution $p'(Q)$ of~\eqref{eq:pp} is $\beta$-bounded with $\beta$ as in~\eqref{eq:beta} and
\begin{equation}\label{eq:k}
k = \frac{1}{2(1+K)}.
\end{equation}
\end{prop}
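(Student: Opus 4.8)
The plan is to derive the proposition from a single application of Lemma~\ref{lem:bounded}, following the template used in~\cite{bombin:2015:single-shot} for single-shot error correction. First I would fix the graph: take its node set to be $\Gamma:=\mathcal E\sqcup\mathcal Q$, and declare two nodes adjacent whenever the corresponding geometric objects (a dual edge of an inner rg-plaquette, or an outer qubit) lie within a fixed lattice distance $L$. By the regularity assumption on the family --- translationally invariant up to boundaries, locally identical, only the overall scale varying --- the maximum degree is then bounded by a constant uniform across the family once $L$ is fixed. Set $\Gamma_1:=\mathcal E$, $\Gamma_2:=\mathcal Q$, and take $p_1:=p$, which is $\alpha$-bounded over subsets of $\Gamma_1$ by hypothesis.

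Next I would define the map $f$ so that the $p_2$ of~\eqref{eq:p2} becomes the residual-noise distribution $p'$ of~\eqref{eq:pp}. Given $\delta\subseteq\mathcal E$, form the minimal correction $\delta_0$ of~\eqref{eq:delta0}, decompose $\delta+\delta_0=\bigsqcup_i\omega_i$ into its connected flux-lines, and choose for each $i$, as in~\eqref{eq:Egamma}, a minimal-support b-string operator $E_{\omega_i}\in\mathcal B$ with $\synd E_{\omega_i}=\partial\omega_i$. Using the freedom of step~3 of the collapse we take the residual error to be $E_{\delta+\delta_0}=\prod_i E_{\omega_i}$, and set
\[
f(\delta):=\delta\;\cup\;\bigcup_i\supp E_{\omega_i}\ \subseteq\ \Gamma .
\]
Then $f(\delta)\cap\Gamma_2=\bigcup_i\supp E_{\omega_i}=\supp E_{\delta+\delta_0}$, so the $p_2$ of~\eqref{eq:p2} coincides with the $p'$ of~\eqref{eq:pp}.

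The core of the argument is checking hypothesis~\eqref{eq:connectivity} with $k$ as in~\eqref{eq:k}. Fix $\delta$ and a connected component $V_{\text c}$ of $f(\delta)$. Since $\supp E_{\omega_i}\subseteq\mathcal Q$, we have $V_{\text c}\cap\mathcal E=V_{\text c}\cap\delta$, so $|V_{\text c}|=|V_{\text c}\cap\delta|+|V_{\text c}\cap\mathcal Q|$ and~\eqref{eq:connectivity} with $k=1/(2(1+K))$ is equivalent to $|V_{\text c}\cap\mathcal Q|\le(2K+1)\,|V_{\text c}\cap\delta|$. Two geometric inputs drive this. First, the b-string realizing $E_{\omega_i}$ can be chosen to shadow $\omega_i$ at bounded lattice distance --- the ``inspection of the geometries'' invoked in the main text --- so that for a suitable lattice constant $L$ the set $\supp E_{\omega_i}$ is connected in the graph and each edge of $\omega_i$ lies within distance $L$ of it; together with~\eqref{eq:K} this makes every gadget $(\omega_i\cap\delta)\cup\supp E_{\omega_i}$ connected and carrying at most $K|\omega_i|$ qubits. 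Second, the main-text inequality $|\delta_0+\omega_i|\ge|\delta_0|$, a consequence of the minimality of $\delta_0$ and the absence of inner endpoints in $\omega_i$, gives $|\omega_i\cap\delta|\ge|\omega_i|/2$. Every qubit of $V_{\text c}$ lies in some $\supp E_{\omega_i}$ and is thus connected, through that gadget, to $\omega_i\cap\delta\neq\emptyset$; so, with $I$ indexing the flux-lines whose gadget meets $V_{\text c}$,
\[
|V_{\text c}\cap\mathcal Q|\le K\sum_{i\in I}|\omega_i|\le 2K\sum_{i\in I}|\omega_i\cap\delta|\le 2K\,|V_{\text c}\cap\delta|\le(2K+1)\,|V_{\text c}\cap\delta| ,
\]
which is~\eqref{eq:connectivity} for the stated $k$, with room to spare; in particular a component containing a qubit always contains a faulty edge, so the bound is never vacuous. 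Lemma~\ref{lem:bounded} then provides $\alpha_0>0$ with the claimed conclusion, $\beta$ as in~\eqref{eq:beta}.

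I expect the main obstacle to be the setup rather than the estimate: one has to massage the problem into the precise shape Lemma~\ref{lem:bounded} demands --- choosing $\Gamma$, $L$ and $f$ so that $f$ genuinely maps subsets of $\Gamma_1$ to subsets of $\Gamma$, so that $f(\delta)\cap\Gamma_2$ reproduces~\eqref{eq:pp}, and so that the bounded-degree hypothesis holds with a constant uniform over the family --- and to make precise the geometric ``shadowing'' of flux-lines by b-strings, which simultaneously underpins the connectivity of the gadgets, the constant $K$ of~\eqref{eq:K}, and the choice of $L$. Once those are pinned down, the counting with the factors $1/2$ and $K$ is routine.
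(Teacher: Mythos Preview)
Your overall template—reduce to Lemma~\ref{lem:bounded} with $\Gamma_1=\mathcal E$, $\Gamma_2=\mathcal Q$, and combine the factor $1/2$ coming from the minimality of $\delta_0$ with the factor $K$ from~\eqref{eq:K}—matches the paper. The gap is in the connectivity step. You claim that $E_{\omega_i}$ can be chosen so that \emph{every} edge of $\omega_i$ lies within a fixed lattice distance $L$ of $\supp E_{\omega_i}$. This cannot hold uniformly across the family: the flux-line $\omega_i$ consists of dual edges in the 3D inner colex and may descend arbitrarily deep into the bulk before returning to an interface cell, whereas $\supp E_{\omega_i}\subseteq\mathcal Q$ lives entirely on the 2D outer facet. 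A bulk edge of $\omega_i$ at depth $d$ is at distance of order $d$ from every outer qubit, so no fixed $L$ works. Consequently your gadget $(\omega_i\cap\delta)\cup\supp E_{\omega_i}$ need not be connected, and a component $V_{\mathrm c}$ of $f(\delta)$ can contain all of $\supp E_{\omega_i}$ yet only those few edges of $\omega_i\cap\delta$ that happen to sit near the interface, so that $|\delta\cap V_{\mathrm c}|/|V_{\mathrm c}|$ is not bounded below by any fixed $k$. Nor can you bridge the scattered bulk edges of $\omega_i\cap\delta$ to one another through $\omega_i\cap\delta_0$, since those edges are absent from your $f(\delta)=\delta\cup\bigcup_i\supp E_{\omega_i}$.

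The paper avoids this by taking $f(\delta)=(\delta+\delta_0)\cup\supp E_{\delta+\delta_0}$, so that every flux-line lies \emph{entirely} in $f(\delta)$, and by using a purely combinatorial graph (dual edges linked when they share an endpoint; qubits linked when they share a b-edge or an rg-plaquette; cross-links between qubits and dual edges at outer rg-plaquettes). This graph is designed so that any connected component $\delta_{\mathrm c}\sqcup Q_{\mathrm c}$ of $f(\delta)$ enjoys two closure properties: $\delta_{\mathrm c}$ has no inner endpoints, and the restriction of $E_{\delta+\delta_0}$ to $Q_{\mathrm c}$ has syndrome $\partial\delta_{\mathrm c}$. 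The \emph{global} minimality of $\delta_0$ and of $E_{\delta+\delta_0}$ can then be invoked directly at the component level—replacing $\delta_0$ by $\delta_0+\delta_{\mathrm c}$, respectively $E_{\delta+\delta_0}$ by $E_{\delta+\delta_0}E_{\mathrm c}E_{\delta_{\mathrm c}}$—to obtain $2|\delta\cap\delta_{\mathrm c}|\ge|\delta_{\mathrm c}|$ and $K|\delta_{\mathrm c}|\ge|Q_{\mathrm c}|$, hence~\eqref{eq:connectivity} with the stated $k$. No per-flux-line shadowing is needed, and $E_{\delta+\delta_0}$ remains the globally minimal operator of~\eqref{eq:Egamma}, so~\eqref{eq:pp} is addressed as written without any redefinition.
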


\begin{proof}
Construct for each code in the family a graph with node set
\begin{equation}
\Gamma=\mathcal E\sqcup\mathcal Q.
\end{equation}
and such that two nodes $a$, $b$ are linked if
\begin{itemize}
\item
$a,b\in \mathcal E$ and $a$ and $b$ share an inner or outer endpoint,
\item
$a,b\in \mathcal Q$ and $a$ and $b$ are both in the same b-edge or in the same rg-plaquette, or
\item
$a\in \mathcal Q,b\in \mathcal E$ and $a$ is in a rg-plaquette that is an outer endpoint of $b$.
\end{itemize}
This definition is designed so that trivially (i) the family of graphs has bounded maximum degree and (ii) the following is satisfied:
given a connected component $\gamma_\mathrm c\sqcup Q_\mathrm c$ of a cluster $\gamma\sqcup Q$, where $\gamma_\mathrm c, \gamma\subseteq \mathcal E$ and $Q_\mathrm c, Q\subseteq \mathcal Q$,
and given $E\in\mathcal B$ with $\supp E=Q$,
\begin{align}
\delta\in\mathcal E_0 \quad &\Longrightarrow\quad \delta_\mathrm c\in \mathcal E_0,\label{eq:i1}\\
\synd E=\partial\delta\quad &\Longrightarrow\quad \synd E_\mathrm c=\partial\delta_\mathrm c,\label{eq:i2}
\end{align}
where $E_\mathrm c$ is the restriction of $E$ to $Q_\mathrm c$.

The result will follow by applying lemma~\ref{lem:bounded} to the above family of graphs, taking $\Gamma_1=\mathcal E$, $\Gamma_2=\mathcal Q$ and 
\begin{equation}
f(\delta)=(\delta+\delta_0)\cup\supp E_{\delta+\delta_0}.
\end{equation}
In this case \eqref{eq:connectivity} reads
\begin{equation}\label{eq:condconn}
2(1+K)|\delta\cap\delta_\mathrm c|\geq |\delta_\mathrm c|+|Q_\mathrm c|,
\end{equation}
where $\delta_\mathrm c\sqcup Q_\mathrm c$ is any connected component of $f(\delta)$.
Notice that
\begin{equation}
|\delta\cap \delta_\mathrm c|=|\delta_\mathrm c|-|\delta_0\cap \delta_\mathrm c|.
\end{equation}
According to~\eqref{eq:i1} $\delta_{\text c}\in\mathcal E_0$ and thus 
\begin{equation}
\delta+\delta_0'\in\mathcal E_0,\qquad\delta_0':=\delta_0+\delta_{\text c}.
\end{equation}
By the minimality of $\delta_0$~\eqref{eq:delta0} 
\begin{equation}\label{eq:byminimality}
0\leq|\delta_0'|-|\delta_0|=|\delta_\mathrm c|-2|\delta_\mathrm c\cap \delta_0|=2|\delta\cap \delta_\mathrm c|-|\delta_\mathrm c|.
\end{equation}
Thus
\begin{equation}\label{eq:condconn1}
2|\delta\cap \delta_\mathrm c|\geq |\delta_{\text c}|.
\end{equation}

If $E_\mathrm c$ is the restriction of $E_{\delta+\delta_0}$ to $Q_\mathrm c$, according to~\eqref{eq:i2} $\synd E_\mathrm c=\partial \delta_\mathrm c$.
Then
\begin{equation}
\synd E'=\synd E_{\delta+\delta_0},\qquad E':=E_{\delta+\delta_0} E_\mathrm c E_{\delta_\mathrm c},
\end{equation}
and
\begin{multline}
|\supp (E_{\delta+\delta_0}E_\mathrm c)|+|\supp E'_{\delta_\mathrm c}| \geq |\supp E'| \geq \\
\geq |\supp E_{\delta+\delta_0}|=|\supp (E_{\delta+\delta_0}E_\mathrm c)|+|Q_\mathrm c|.
\end{multline}
where  the second inequality is by the minimality of $E_{\delta+\delta_0}$~\eqref{eq:Egamma}.
Using~\eqref{eq:K}
\begin{equation}\label{eq:condconn2}
K|\delta_{\text c}|\geq |\supp E'_{\delta_\mathrm c}|\geq |Q_\mathrm c|.
\end{equation}
The inequalities~\eqref{eq:condconn1} and~\eqref{eq:condconn2} imply~\eqref{eq:condconn}.
\end{proof}

\bibliography{refs}

\begin{thebibliography}{25}
\expandafter\ifx\csname natexlab\endcsname\relax\def\natexlab#1{#1}\fi
\expandafter\ifx\csname bibnamefont\endcsname\relax
  \def\bibnamefont#1{#1}\fi
\expandafter\ifx\csname bibfnamefont\endcsname\relax
  \def\bibfnamefont#1{#1}\fi
\expandafter\ifx\csname citenamefont\endcsname\relax
  \def\citenamefont#1{#1}\fi
\expandafter\ifx\csname url\endcsname\relax
  \def\url#1{\texttt{#1}}\fi
\expandafter\ifx\csname urlprefix\endcsname\relax\def\urlprefix{URL }\fi
\providecommand{\bibinfo}[2]{#2}
\providecommand{\eprint}[2][]{\url{#2}}

\bibitem[{\citenamefont{Lidar and Brun~(editors)}(2013)}]{lidar:2013:quantum}
\bibinfo{author}{\bibfnamefont{D.}~\bibnamefont{Lidar}} \bibnamefont{and}
  \bibinfo{author}{\bibfnamefont{T.}~\bibnamefont{Brun~(editors)}},
  \emph{\bibinfo{title}{Quantum Error Correction}}
  (\bibinfo{publisher}{Cambridge University Press}, \bibinfo{address}{New
  York}, \bibinfo{year}{2013}).

\bibitem[{\citenamefont{Dennis et~al.}(2002)\citenamefont{Dennis, Kitaev,
  Landahl, and Preskill}}]{dennis:2002:TQM}
\bibinfo{author}{\bibfnamefont{E.}~\bibnamefont{Dennis}},
  \bibinfo{author}{\bibfnamefont{A.}~\bibnamefont{Kitaev}},
  \bibinfo{author}{\bibfnamefont{A.}~\bibnamefont{Landahl}}, \bibnamefont{and}
  \bibinfo{author}{\bibfnamefont{J.}~\bibnamefont{Preskill}},
  \bibinfo{journal}{J. Math. Phys.} \textbf{\bibinfo{volume}{43}},
  \bibinfo{pages}{4452} (\bibinfo{year}{2002}).

\bibitem[{\citenamefont{Kitaev}(2003)}]{kitaev:2003:ftanyons}
\bibinfo{author}{\bibfnamefont{A.}~\bibnamefont{Kitaev}},
  \bibinfo{journal}{Ann. Phys.} \textbf{\bibinfo{volume}{303}},
  \bibinfo{pages}{2} (\bibinfo{year}{2003}).

\bibitem[{\citenamefont{Bravyi and
  K{\"o}nig}(2013)}]{bravyi:2013:classification}
\bibinfo{author}{\bibfnamefont{S.}~\bibnamefont{Bravyi}} \bibnamefont{and}
  \bibinfo{author}{\bibfnamefont{R.}~\bibnamefont{K{\"o}nig}},
  \bibinfo{journal}{Phys. Rev. Lett.} \textbf{\bibinfo{volume}{110}},
  \bibinfo{pages}{170503} (\bibinfo{year}{2013}).

\bibitem[{\citenamefont{Bombin and
  Martin-Delgado}(2007{\natexlab{a}})}]{bombin:2007:3DCC}
\bibinfo{author}{\bibfnamefont{H.}~\bibnamefont{Bombin}} \bibnamefont{and}
  \bibinfo{author}{\bibfnamefont{M.~A.} \bibnamefont{Martin-Delgado}},
  \bibinfo{journal}{Phys. Rev. Lett.} \textbf{\bibinfo{volume}{98}},
  \bibinfo{pages}{160502} (\bibinfo{year}{2007}{\natexlab{a}}).

\bibitem[{\citenamefont{Bombin and Martin-Delgado}(2006)}]{bombin:2006:2DCC}
\bibinfo{author}{\bibfnamefont{H.}~\bibnamefont{Bombin}} \bibnamefont{and}
  \bibinfo{author}{\bibfnamefont{M.~A.} \bibnamefont{Martin-Delgado}},
  \bibinfo{journal}{Phys. Rev. Lett.} \textbf{\bibinfo{volume}{97}},
  \bibinfo{pages}{180501} (\bibinfo{year}{2006}).

\bibitem[{\citenamefont{Nigg~\emph{et al.}}(2014)}]{nigg:2014:experimental}
\bibinfo{author}{\bibfnamefont{D.}~\bibnamefont{Nigg~\emph{et al.}}},
  \bibinfo{journal}{Science} \textbf{\bibinfo{volume}{345}},
  \bibinfo{pages}{302} (\bibinfo{year}{2014}).

\bibitem[{\citenamefont{Pastawski and Yoshida}(2015)}]{pastawski:2015:fault}
\bibinfo{author}{\bibfnamefont{F.}~\bibnamefont{Pastawski}} \bibnamefont{and}
  \bibinfo{author}{\bibfnamefont{B.}~\bibnamefont{Yoshida}},
  \bibinfo{journal}{Physical Review A} \textbf{\bibinfo{volume}{91}},
  \bibinfo{pages}{012305} (\bibinfo{year}{2015}).

\bibitem[{\citenamefont{Bravyi and Kitaev}(2005)}]{bravyi:2005:universal}
\bibinfo{author}{\bibfnamefont{S.}~\bibnamefont{Bravyi}} \bibnamefont{and}
  \bibinfo{author}{\bibfnamefont{A.}~\bibnamefont{Kitaev}},
  \bibinfo{journal}{Phys. Rev. A} \textbf{\bibinfo{volume}{71}},
  \bibinfo{pages}{22316} (\bibinfo{year}{2005}).

\bibitem[{\citenamefont{Bombin}(2015{\natexlab{a}})}]{bombin:2015:gauge}
\bibinfo{author}{\bibfnamefont{H.}~\bibnamefont{Bombin}}, \bibinfo{journal}{New
  Journal of Physics} \textbf{\bibinfo{volume}{17}}, \bibinfo{pages}{083002}
  (\bibinfo{year}{2015}{\natexlab{a}}).

\bibitem[{\citenamefont{Bombin}(2015{\natexlab{b}})}]{bombin:2015:single-shot}
\bibinfo{author}{\bibfnamefont{H.}~\bibnamefont{Bombin}},
  \bibinfo{journal}{Phys. Rev. X} \textbf{\bibinfo{volume}{5}},
  \bibinfo{pages}{031043} (\bibinfo{year}{2015}{\natexlab{b}}).

\bibitem[{\citenamefont{Brown et~al.}(2015)\citenamefont{Brown, Nickerson, and
  Browne}}]{brown:2015:fault}
\bibinfo{author}{\bibfnamefont{B.~J.} \bibnamefont{Brown}},
  \bibinfo{author}{\bibfnamefont{N.~H.} \bibnamefont{Nickerson}},
  \bibnamefont{and} \bibinfo{author}{\bibfnamefont{D.~E.}
  \bibnamefont{Browne}}, \bibinfo{journal}{arXiv:1503.08217}
  (\bibinfo{year}{2015}).

\bibitem[{\citenamefont{Poulin}(2005)}]{poulin:2005:stabilizer}
\bibinfo{author}{\bibfnamefont{D.}~\bibnamefont{Poulin}},
  \bibinfo{journal}{Phys. Rev. Lett.} \textbf{\bibinfo{volume}{95}},
  \bibinfo{pages}{230504} (\bibinfo{year}{2005}).

\bibitem[{\citenamefont{Paetznick and
  Reichardt}(2013)}]{paetznick:2013:universal}
\bibinfo{author}{\bibfnamefont{A.}~\bibnamefont{Paetznick}} \bibnamefont{and}
  \bibinfo{author}{\bibfnamefont{B.~W.} \bibnamefont{Reichardt}},
  \bibinfo{journal}{Phys. Rev. Lett.} \textbf{\bibinfo{volume}{111}},
  \bibinfo{pages}{90505} (\bibinfo{year}{2013}).

\bibitem[{\citenamefont{Bombin and
  Martin-Delgado}(2007{\natexlab{b}})}]{bombin:2007:branyons}
\bibinfo{author}{\bibfnamefont{H.}~\bibnamefont{Bombin}} \bibnamefont{and}
  \bibinfo{author}{\bibfnamefont{M.~A.} \bibnamefont{Martin-Delgado}},
  \bibinfo{journal}{Phys. Rev. B} \textbf{\bibinfo{volume}{75}},
  \bibinfo{pages}{75103} (\bibinfo{year}{2007}{\natexlab{b}}).

\bibitem[{\citenamefont{Kubica and Beverland}(2015)}]{kubica:2015:universal}
\bibinfo{author}{\bibfnamefont{A.}~\bibnamefont{Kubica}} \bibnamefont{and}
  \bibinfo{author}{\bibfnamefont{M.~E.} \bibnamefont{Beverland}},
  \bibinfo{journal}{Physical Review A} \textbf{\bibinfo{volume}{91}},
  \bibinfo{pages}{032330} (\bibinfo{year}{2015}).

\bibitem[{\citenamefont{Katzgraber et~al.}(2009)\citenamefont{Katzgraber,
  Bombin, and Martin-Delgado}}]{katzgraber:2009:cc}
\bibinfo{author}{\bibfnamefont{H.}~\bibnamefont{Katzgraber}},
  \bibinfo{author}{\bibfnamefont{H.}~\bibnamefont{Bombin}}, \bibnamefont{and}
  \bibinfo{author}{\bibfnamefont{M.}~\bibnamefont{Martin-Delgado}},
  \bibinfo{journal}{Phys. Rev. Lett.} \textbf{\bibinfo{volume}{103}},
  \bibinfo{pages}{90501} (\bibinfo{year}{2009}).

\bibitem[{\citenamefont{Katzgraber et~al.}(2010)\citenamefont{Katzgraber,
  Bombin, Andrist, and Martin-Delgado}}]{katzgraber:2009:unionjack}
\bibinfo{author}{\bibfnamefont{H.}~\bibnamefont{Katzgraber}},
  \bibinfo{author}{\bibfnamefont{H.}~\bibnamefont{Bombin}},
  \bibinfo{author}{\bibfnamefont{R.}~\bibnamefont{Andrist}}, \bibnamefont{and}
  \bibinfo{author}{\bibfnamefont{M.}~\bibnamefont{Martin-Delgado}},
  \bibinfo{journal}{Phys. Rev. A (2010)} \textbf{\bibinfo{volume}{81}},
  \bibinfo{pages}{12319} (\bibinfo{year}{2010}).

\bibitem[{\citenamefont{Andrist et~al.}(2010)\citenamefont{Andrist, Katzgraber,
  Bombin, and Martin-Delgado}}]{andrist:2010:tricolored}
\bibinfo{author}{\bibfnamefont{R.}~\bibnamefont{Andrist}},
  \bibinfo{author}{\bibfnamefont{H.}~\bibnamefont{Katzgraber}},
  \bibinfo{author}{\bibfnamefont{H.}~\bibnamefont{Bombin}}, \bibnamefont{and}
  \bibinfo{author}{\bibfnamefont{M.}~\bibnamefont{Martin-Delgado}},
  \bibinfo{journal}{New J. Phys.} \textbf{\bibinfo{volume}{13}},
  \bibinfo{pages}{083006} (\bibinfo{year}{2010}).

\bibitem[{\citenamefont{Landahl et~al.}(2011)\citenamefont{Landahl, Anderson,
  and Rice}}]{landahl:2011:fault}
\bibinfo{author}{\bibfnamefont{A.~J.} \bibnamefont{Landahl}},
  \bibinfo{author}{\bibfnamefont{J.~T.} \bibnamefont{Anderson}},
  \bibnamefont{and} \bibinfo{author}{\bibfnamefont{P.~R.} \bibnamefont{Rice}},
  \bibinfo{journal}{arXiv preprint arXiv:1108.5738}  (\bibinfo{year}{2011}).

\bibitem[{\citenamefont{Bombin et~al.}(2012)\citenamefont{Bombin, Andrist,
  Ohzeki, Katzgraber, and Martin-Delgado}}]{bombin:2012:strong}
\bibinfo{author}{\bibfnamefont{H.}~\bibnamefont{Bombin}},
  \bibinfo{author}{\bibfnamefont{R.}~\bibnamefont{Andrist}},
  \bibinfo{author}{\bibfnamefont{M.}~\bibnamefont{Ohzeki}},
  \bibinfo{author}{\bibfnamefont{H.}~\bibnamefont{Katzgraber}},
  \bibnamefont{and}
  \bibinfo{author}{\bibfnamefont{M.}~\bibnamefont{Martin-Delgado}},
  \bibinfo{journal}{Physical Review X} \textbf{\bibinfo{volume}{2}},
  \bibinfo{pages}{21004} (\bibinfo{year}{2012}).

\bibitem[{\citenamefont{Sarvepalli and
  Raussendorf}(2012)}]{sarvepalli:2012:efficient}
\bibinfo{author}{\bibfnamefont{P.}~\bibnamefont{Sarvepalli}} \bibnamefont{and}
  \bibinfo{author}{\bibfnamefont{R.}~\bibnamefont{Raussendorf}},
  \bibinfo{journal}{Physical Review A} \textbf{\bibinfo{volume}{85}},
  \bibinfo{pages}{022317} (\bibinfo{year}{2012}).

\bibitem[{\citenamefont{Delfosse}(2014)}]{delfosse:2014:decoding}
\bibinfo{author}{\bibfnamefont{N.}~\bibnamefont{Delfosse}},
  \bibinfo{journal}{Physical Review A} \textbf{\bibinfo{volume}{89}},
  \bibinfo{pages}{012317} (\bibinfo{year}{2014}).

\bibitem[{\citenamefont{Stephens}(2014)}]{stephens:2014:efficient}
\bibinfo{author}{\bibfnamefont{A.~M.} \bibnamefont{Stephens}},
  \bibinfo{journal}{arXiv:1402.3037}  (\bibinfo{year}{2014}).

\bibitem[{\citenamefont{Anderson et~al.}(2014)\citenamefont{Anderson,
  Duclos-Cianci, and Poulin}}]{anderson:2014:fault}
\bibinfo{author}{\bibfnamefont{J.~T.} \bibnamefont{Anderson}},
  \bibinfo{author}{\bibfnamefont{G.}~\bibnamefont{Duclos-Cianci}},
  \bibnamefont{and} \bibinfo{author}{\bibfnamefont{D.}~\bibnamefont{Poulin}},
  \bibinfo{journal}{Physical review letters} \textbf{\bibinfo{volume}{113}},
  \bibinfo{pages}{080501} (\bibinfo{year}{2014}).

\end{thebibliography}

\end{document}